\newtheorem{theorem}{Theorem}
\newtheorem{lemma}{Lemma}
\newtheorem{remark}{Remark}
\newcommand\F{\mathbb{F}}
\newcommand\R{\mathbb{R}}
\newcommand\E{\mathbb{E}}
\newcommand{\T}{\mathbf{T}}
\newcommand{\U}{\mathbf{U}}
\newcommand{\V}{\mathbf{V}}
\newcommand{\X}{\mathbf{X}}
\newcommand{\Y}{\mathbf{Y}}
\newcommand{\N}{\mathbf{N}}
\newcommand{\M}{\mathbf{M}}
\def\P{\mathbb{P}}
\DeclareMathOperator*{\argmax}{arg\,max}
\def\xsumint#1#2#3{{\setbox0=\hbox{${#2#3}{#1\int}$}
		\vcenter{\hbox{$#2#3$}}\kern-.5\wd0}}
\def\@begintheorem#1#2{\@IEEEtmpitemindent\itemindent\relax\topsep 0pt\rmfamily\trivlist%
	\item[]\textit{\bfseries\noindent #1\ #2:} \itemindent\@IEEEtmpitemindent\relax\itshape}
\def\@opargbegintheorem#1#2#3{\@IEEEtmpitemindent\itemindent\relax\topsep 0pt\rmfamily \trivlist%
	\item[]\textit{\bfseries\noindent #1\ #2\ (#3):} \itemindent\@IEEEtmpitemindent\relax\itshape}
\def\@endtheorem{\endtrivlist}
\def\BibTeX{{\rm B\kern-.05em{\sc i\kern-.025em b}\kern-.08em
    T\kern-.1667em\lower.7ex\hbox{E}\kern-.125emX}}
\begin{document}

\title{Attacking Masked Cryptographic Implementations: Information-Theoretic Bounds}


\author{%
	\IEEEauthorblockN{Wei Cheng\IEEEauthorrefmark{1}, Yi Liu\IEEEauthorrefmark{1}, Sylvain Guilley\IEEEauthorrefmark{2}\IEEEauthorrefmark{1}, and Olivier Rioul\IEEEauthorrefmark{1}}\\[-2mm]
	\IEEEauthorblockA{\IEEEauthorrefmark{1}%
		LTCI, T\'el\'ecom Paris,
		Institut Polytechnique de Paris,
		91\,120, Palaiseau, France,
		firstname.lastname@telecom-paris.fr}
	\IEEEauthorblockA{\IEEEauthorrefmark{2}%
		Secure-IC S.A.S.,
		75\,014, Paris, France,
		sylvain.guilley@secure-ic.com
		\vspace{-0.3cm}
	}
}

\maketitle

\begin{abstract}
Measuring the information leakage is critical for evaluating the practical security of cryptographic devices against side-channel analysis. Information-theoretic measures can be used (along with Fano's inequality) to derive upper bounds on the success rate of any possible attack in terms of the number of side-channel measurements.  Equivalently, this gives lower bounds on the number of queries for a given success probability of attack. 
In this paper, we consider cryptographic implementations protected by (first-order) masking schemes, and derive several information-theoretic bounds on the efficiency of any (second-order) attack. 
The obtained bounds are generic in that they do not depend on a specific attack but only on the leakage and masking models, through the mutual information between side-channel measurements and the secret key.
Numerical evaluations confirm that our bounds reflect the practical performance of optimal maximum likelihood attacks.
\end{abstract}
\vspace{0.1cm}
\begin{IEEEkeywords}
Side-Channel Analysis, Information-Theoretic Metric, Masking Scheme, Success Rate, Monte-Carlo Simulation.
\end{IEEEkeywords}

\section{Introduction}

Since the seminal work by Kocher et al.\@~\cite{DBLP:conf/crypto/KocherJJ99}, side-channel analyses (SCAs) have been ones of the most powerful practical attacks against cryptographic devices. They exploit physically observable information leakage like instantaneous power consumption~\cite{DBLP:conf/crypto/KocherJJ99} or  electromagnetic radiation~\cite{Gandolfi:2001:EAC:648254.752700} to extract secret keys as illustrated in Fig.~\ref{fig:channel:side}. 

\begin{figure}[!h]
	\centering
	\vspace{-0.2cm}
	\includegraphics[width=0.4\textwidth]{./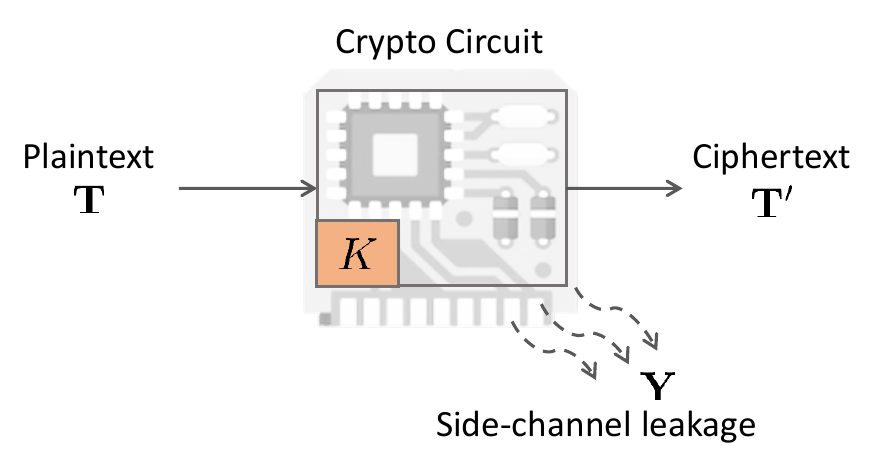}
	\vspace{-0.4cm}
	\caption{Side-channel in a nutshell. An adversary attempts to recover the secret key $K$ embedded in a cryptographic circuit by exploiting noisy side-channel leakage $\Y$ and public plaintext $\T$ (or ciphertext $\T'$).}
	\label{fig:channel:side}
	\vspace{-0.2cm}
\end{figure}

In last two decades, many different types of attacks have been proposed to exploit various types of leakages. In particular, Heuser et al.~\cite{DBLP:conf/ches/HeuserRG14} presented a channel representation of side-channel analysis to derive optimal (maximum likelihood) attacks that maximize success rate for a given leakage model. Other performance metrics such as guessing entropy also provide a fair comparison between different attacks~\cite{DBLP:conf/eurocrypt/StandaertMY09}.

To counteract SCAs, many countermeasures were proposed; \emph{masking} is a well-established protection which provides provable security~\cite{DBLP:conf/crypto/IshaiSW03,DBLP:conf/ches/RivainP10,DBLP:conf/eurocrypt/ProuffR13}. 
The idea is to split a sensitive
(secret-dependent) 
variable into several shares and perform computations separately on each (secret-independent) share. Since the masks themselves are leaking, sound attacks against masked implementations must be multidimensional and require an exponentially high number of measurements in the number of shares to succeed~\cite{DBLP:conf/eurocrypt/DucFS15}. 

A precise evaluation of the efficiency of \emph{any} possible side-channel attack in the presence of countermeasures is an open problem.  Given a set of side-channel measurements, can one establish a generic upper bound on the success rate of any attack?
Several bounds have been proposed in~\cite{DBLP:conf/eurocrypt/ProuffR13,DBLP:conf/eurocrypt/DucFS15} by approximations and inequalities. The resulting lower bounds (on the number of traces needed for a given success rate) are quite loose. Ch\'erisey et al.~\cite{DBLP:conf/isit/CheriseyGRP19,DBLP:journals/tches/CheriseyGRP19} derived several upper bounds on the success rate using mutual information, which are tight in assessing \emph{unprotected} cryptographic implementations. However, as we show in this paper, such bounds can also be very loose when targeting a \emph{protected} cryptographic implementation.

In this paper, we aim at providing tight bounds on the success rate of any SCA by leveraging information-theoretic tools. To do so, we consider a channel framework similar to the ones proposed in~\cite{DBLP:conf/ches/HeuserRG14,DBLP:conf/c2si/GuilleyHR17,DBLP:conf/isit/CheriseyGRP19,DBLP:journals/tches/CheriseyGRP19} but
enhance it for masking schemes.  
The overview of the framework is shown in Fig.~\ref{fig:channel:view} with notations introduced in the following Subsection.

\begin{figure*}[!t]
	\centering
	\vspace{-0.3cm}
	\includegraphics[width=0.85\textwidth,]{./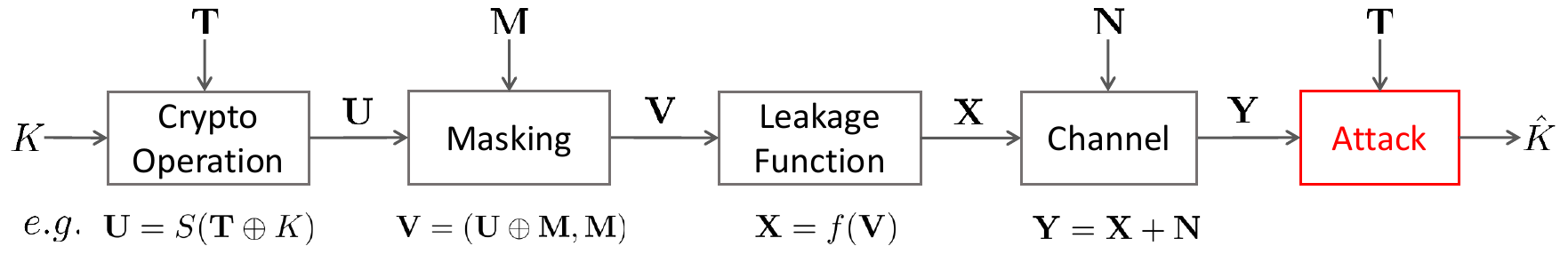}
	\vspace{-0.2cm}
	\caption{Channel representation of side-channel analysis of a masked cryptographic operation.}
	\label{fig:channel:view}
	\vspace{-0.4cm}
\end{figure*}

\subsection{Notations}

In the sequel, uppercase letters (e.g., $X$) denote random variables; 
lowercase letters (e.g., $x$) are for realizations (typically bytes); bold letters are for vectors, e.g., $\X=(X_1, X_2, \ldots, X_q)$. 
The cryptographic implementation typically works on bytes (e.g., of 8 bits) where the attacker, in a divide and conquer strategy, tries to recover each key byte~$K$ one by one. Let $T\oplus K$ be the bitwise exclusive or (XOR)   operation between a text byte and a key. For a sequence of $q$ text bytes $\T$ we write $\T\oplus K=(T_1\oplus K, T_2\oplus K, \ldots, T_q\oplus K)$.
Also let $w_H(X)$ denote the \emph{Hamming weight} of $X$ and $w_H(\X)=(w_H(X_1), w_H(X_2), \ldots, w_H(X_q))$. 

Throughout this paper we make the following notations as illustrated in Fig.~\ref{fig:channel:view}:
\begin{itemize}
	\item $K\in\F_{2^\ell}$ is the targeted key byte (typically $\ell=8$, e.g., for AES); 
	\item $\mathbf{T}\in\F_{2^\ell}^q$ denotes plaintext or ciphertext sequences, as vectors of length $q$; 
	\item $\mathbf{U}$ is the \emph{sensitive variable}, say $\mathbf{U} = S(\mathbf{T}\oplus K)$ where $S$ denotes a cryptographic operation like the Sbox in AES;
	\item 
	$\mathbf{V} = (\mathbf{U}\oplus \mathbf{M}, \mathbf{M})$ in a first-order Boolean masking with random mask $\mathbf{M}\in\F_{2^\ell}^q$; here $\V=(\V_1, \V_2)\in\F_{2^\ell}^{q\times 2}$ is a concatenation of $\V_1=\mathbf{U}\oplus \mathbf{M}$ and $\V_2=\mathbf{M}$; In the unprotected case (no masking) we would simply have $\mathbf{V} = \mathbf{U}$ as in~\cite{DBLP:conf/ches/HeuserRG14,DBLP:journals/tches/CheriseyGRP19};
	\item $\mathbf{X} = f(\mathbf{V}) = f(\mathbf{V}_1) + f(\mathbf{V}_2)$ is the so-called deterministic leakage, where e.g., $f=w_H$ in well-known Hamming weight model as in~\cite{DBLP:conf/ches/HeuserRG14}; more general models are possible;
	\item $\mathbf{Y} = \mathbf{X} + \mathbf{N}$ is the (noisy) leakage which models $q$ measurements (a.k.a. traces) in practice, where $\mathbf{N}$ is an independent i.i.d. noise (memoryless additive channel); in particular $\mathbf{N}\sim\mathcal{N}(0, \sigma^2\mathbf{I})$ for the AWGN channel.
		\item The attack is performed with a so-called distinguisher $\mathcal{D}$ which results in a guessed key $\hat{K}=\mathcal{D}(\Y,\T)$.
\end{itemize} 
From an information-theoretic perspective, it follows from Fig.~\ref{fig:channel:view} that 
conditionally on  $\mathbf{T}$, we have a Markov chain: 
$$K - \mathbf{U} - \mathbf{V} - \mathbf{X} - \mathbf{Y} - \hat{K}.$$ 

\begin{remark}
It is important to note that Fig.~\ref{fig:channel:view} is \emph{not} a genuine communication channel.
The designer wants the secret key $K$ to remain unknown and static (the same secret key is used for every side-channel use) as shown in Fig.~\ref{fig:channel:side}. Therefore, there is no message to be intentionally encoded and transmitted: $K$ leaks unintentionally. Besides, the actual (plain or encrypted) message $\mathbf{T}$ is public in our context and is supposedly known to the adversary. For all these reasons, our situation is totally different from problems such as those arising in a wiretap channel~\cite{DBLP:journals/corr/TyagiV14} for which a message is to be encoded, transmitted and decoded reliably in the presence of an eavesdropper.
\end{remark}

As recalled in~\cite{DBLP:journals/tches/CheriseyGRP19} for a memoryless channel, we have the following relation to single-letter quantities: $I(\X;\Y|\T)\leq q I(X;Y|T)$.  In particular, this explains why mutual information evaluation provides lower bounds on the number $q$ of queries as in~\cite[\S3.1]{DBLP:journals/tches/CheriseyGRP19}. Also, in~\cite[Theorem~4]{DBLP:journals/tifs/ChengGCMD21}, the leakage metric is:
$I(K; Y|T) = I(U; Y|T)$, which is implicitly connected to $q$~\cite{DBLP:journals/ccds/CarletG18}.


\subsection{Our Contributions}
In this work, we derive security bounds for side-channel attacks in the presence of first-order masking countermeasures. Instead of using theoretical upper bounds on mutual information (MI) $I(\X;\Y|\T)$ as in~\cite{DBLP:conf/isit/CheriseyGRP19,DBLP:journals/tches/CheriseyGRP19}, we numerically evaluate mutual information itself to derive bounds on the success rate thanks to Fano's inequality~\cite{information_theory}. 
We also use $I(\U;\Y|\T)$ in place of $I(\X;\Y|\T)$ in the presence of masking because the resulting bounds are much tighter. Numerical results in a commonly used side-channel setting will confirm that our new bound provides more accurate security guarantees for the chip designer in the context of masked cryptographic implementations.

The remainder of this paper is organized as follows. Section~\ref{sec:connect} provides connections between mutual informations (MIs) for different pairs of variables in a side-channel setting. Section~\ref{sec:bound:sr} presents several bounds on success rate. The numerical results for additive Gaussian noise are in Section~\ref{sec:numerical}. Finally, Section~\ref{sec:conclusion} concludes the paper.

\section{Theoretical Preliminaries}
\label{sec:connect}

\subsection{Links between MIs of Different Variables}

With the notations shown in Fig.~\ref{fig:channel:view} in the context of side-channel analysis, we have the following chain of equalities and inequalities for MIs on different pairs of variables.
\begin{lemma}
	\label{lemma:chain:mi}
	With the above definitions and notations, one has 
	\begin{equation}
	I(K; \Y|\T) = I(\U; \Y|\T) \leq I(\V; \Y|\T) = I(\X;\Y|\T).
	\end{equation}
\end{lemma}
As a result, we shall restrict ourselves only on the two MIs $I(\U; \Y|\T)$ and $I(\X; \Y|\T)$, where the former will necessarily give a better bound than the latter. 

\begin{proof}
Conditionally on $\T$, $K - \U - \Y$ is a Markov subchain; by the data processing inequality one has $I(K; \Y|\T) \leq I(\U; \Y|\T)$. Now since $\mathbf{U} = S(\mathbf{T}\oplus K)$ is a deterministic function of $K$ for fixed $\T$, $\U - K - \Y$ also forms a Markov chain conditionally on $\T$ and the converse inequality holds. This shows equality $I(K; \Y|\T) = I(\U; \Y|\T)$.
Similarly, conditionally on $\T$,
$\V - \X - \Y$ is a Markov subchain, but since $\X=f(\V)$, $\X - \V - \Y$ also forms a Markov chain. Then the data processing inequality in both directions implies equality $I(\V; \Y|\T) = I(\X;\Y|\T)$.
The data processing inequality applied to the Markov subchain $\U - \V - \Y$ gives
$I(\U; \Y|\T) \leq I(\V; \Y|\T)$ yet the converse is not true because of the presence of the unknown random mask $\M$.
\end{proof}

\begin{lemma}
With the above definitions and notations, for \emph{any} attack,
	\begin{equation}
	\label{lemma:ml:att}
	I(K; \hat{K})\leq I(K; \hat{K}|\mathbf{T}) \leq I(K;\mathbf{Y}|\mathbf{T}).
	\end{equation}
\end{lemma}
\begin{proof}
Since conditioning reduces entropy, $H(K|\hat{K})\geq H(K|\hat{K},\T)$. Then, since $K$ is independent of $\T$, we have $I(K; \hat{K}|\T) = H(K|\T) - H(K|\hat{K},\T) = H(K) - H(K|\hat{K},\T) \geq H(K) - H(K|\hat{K}) = I(K; \hat{K})$. This proves the first inequality.

Secondly, given $\T$, we have a Markov chain: $K-\Y-\hat{K}$, since for fixed~$\T$, $\hat{K}=\mathcal{D}(\Y,\T)$ is a deterministic function of $\Y$. The data processing inequality ends the proof.
\end{proof}

\begin{remark}
The ML (maximum likelihood) rule $\hat{k}=\mathcal{D}(\mathbf{y},\mathbf{t})=\argmax_k \P(\mathbf{Y}=\mathbf{y}|k, \mathbf{T} = \mathbf{t})$ gives the optimal distinguisher~\cite{DBLP:conf/ches/HeuserRG14} when it coincides with MAP (Maximum A Posterior) rule for uniformly distributed $K$ --- a common assumption in SCA.
\end{remark}

A trivial upper bound on $I(K;\mathbf{Y}|\mathbf{T})$ is as follows.
\begin{lemma}\label{lem:I_uyt:h_k}
With the above definitions and notations,
\begin{equation}
	\label{lemma:mi:h_k}
	I(K;\mathbf{Y}|\mathbf{T}) \leq H(K) \leq \ell.
\end{equation}
where typically $\ell=8$ bits.	
\end{lemma}
\begin{proof}
$I(K;\Y|\T)= H(K|\T) - H(K|\Y,\T) = H(K) - H(K|\Y,\T) \leq H(K)$.
\end{proof}
Lemma~\ref{lemma:mi:h_k} simply reflects the fact that the total amount of information any adversary could extract cannot exceed the information carried by the secret key, as measured by the entropy $H(K)$.
Notice that a common assumption in SCAs is that $K$ is uniformly distributed, in which case $H(K) = \ell$.

%


\subsection{Relation to Channel Capacity}

\begin{lemma}
With the above definitions and notations of Fig.~\ref{fig:channel:view},
	\begin{equation}
	I(\X;\Y)-I(\T;\Y) = I(\X;\Y|\T)\geq 0.
	\end{equation}
\end{lemma}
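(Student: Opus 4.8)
The plan is to expand the three‑variable mutual information $I(\T,\X;\Y)$ by the chain rule in two different groupings and then identify a term that vanishes.

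First I would apply the chain rule for mutual information to the pair $(\T,\X)$ against $\Y$, grouped two ways:
\begin{align}
I(\T,\X;\Y) = I(\X;\Y) + I(\T;\Y|\X) = I(\T;\Y) + I(\X;\Y|\T).
\end{align}
Rearranging the two expressions immediately gives
\begin{align}
I(\X;\Y) - I(\T;\Y) = I(\X;\Y|\T) - I(\T;\Y|\X).
\end{align}
So the claimed identity is equivalent to showing $I(\T;\Y|\X)=0$.

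Second, I would establish the Markov chain $\T - \X - \Y$, which forces $I(\T;\Y|\X)=0$. This is a direct consequence of the channel model of Fig.~\ref{fig:channel:view}: the noisy leakage is $\Y = \X + \N$ with $\N\sim\mathcal{N}(0,\sigma\mathbf{I})$ an additive noise that is independent of all upstream quantities, in particular of $\T$. Hence, conditioned on $\X$, the observation $\Y$ is just $\X$ plus fresh independent noise and carries no further information about $\T$; equivalently $p(\mathbf{y}\mid\mathbf{x},\mathbf{t}) = p(\mathbf{y}\mid\mathbf{x})$. (Note this is a different — unconditional — Markov chain from the $K-\U-\V-\X-\Y$ chain that holds only given $\T$; $\X$ and $\T$ are themselves dependent through $\U=S(\T\oplus K)$, which is exactly why the statement is nontrivial.)

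Substituting $I(\T;\Y|\X)=0$ into the rearranged chain‑rule identity yields $I(\X;\Y)-I(\T;\Y)=I(\X;\Y|\T)$. The remaining inequality $I(\X;\Y|\T)\geq 0$ is the standard nonnegativity of conditional mutual information: it is the average over $\mathbf{t}$ of the ordinary mutual informations $I(\X;\Y\mid \T=\mathbf{t})\ge 0$. The only point requiring care — the ``main obstacle'', such as it is — is justifying that the additive noise $\N$ is independent of $\T$ so that $\T-\X-\Y$ genuinely holds; once that modelling assumption is in place, the rest is a mechanical application of the chain rule and nonnegativity.
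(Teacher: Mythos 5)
Your proof is correct and is essentially the paper's argument: both hinge on the Markov chain $\T-\X-\Y$, since your key step $I(\T;\Y|\X)=0$ is exactly the paper's $H(\Y|\X,\T)=H(\Y|\X)$, and your chain-rule expansion of $I(\T,\X;\Y)$ is just a repackaging of the paper's entropy-expansion of $I(\X;\Y|\T)$. The only (harmless) addition is that you explicitly justify the Markov chain from the independence of the additive noise, which the paper takes as given from the channel model.
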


\begin{proof}
	Since $\T - \X - \Y$ forms a Markov chain, one has $H(\Y|\X,\T)=H(\Y|\X)$~\footnote{We use $H$ both discrete and continuous variables, even though $h$ is used more frequently for differential entropy of a continuous variable.}. Hence $I(\X;\Y|\T)=H(\Y|\T)-H(\Y|\X,\T)=H(\Y|\T)-H(\Y|\X)=H(\Y)-H(\Y|\X) - \bigl(H(\Y)-H(\Y|\T)\bigr)=I(\X;\Y)-I(\T;\Y)$.
\end{proof}
Note that the inequality $I(\X;\Y)-I(\T;\Y)\geq 0$ is also a direct consequence of the data processing inequality on the Markov chain  $\T-\X-\Y$.

\addtocounter{footnote}{1}
\footnotetext{We use $\log_2$ to have mutual information and entropy expressed in bits.}
\addtocounter{footnote}{-1}
One is led to define the \emph{capacity} of the side-channel (in bits per $q$ channel uses) as
\begin{equation}
\label{eqn:def:capa}
q\,C 
\!\!= \max_{\T-\X-\Y} I(\X;\Y|\T)
\!\!= \max_{\T-\X-\Y} I(\X;\Y)-I(\T;\Y),
\end{equation}
where the maximum is taken over all distributions of $\X$ given $\T$ such that $\T-\X-\Y$ is a Markov chain.
Because the ``side information'' $\T$ is known both at the ``encoder'' (leaking crypto) and ``decoder'' (attack), the capacity can be determined in the usual way:
\begin{lemma}\label{lemm-capacity}
With the above definitions and notations of Fig.~\ref{fig:channel:view} where the side-channel is independent of $\T$, one has
	\begin{align}
	q\,C=\max_{\X} I(\X;\Y)
	\end{align}
	where the maximum is taken over all channel input distributions $\X$.
\end{lemma}


\begin{proof}
Since $I(\X;\Y|\T)=\E_{\T} I(\X;\Y|\T=\mathbf{t})$, we can choose $p(\mathbf{x}|\mathbf{t})=p(\mathbf{x})$ to maximize each $I(\X;\Y|\T=\mathbf{t})$ to achieve channel capacity  in~\eqref{eqn:def:capa}. As the optimal distribution does not depend on $\mathbf{t}$, it also maximizes the expectation $\E_{\T} I(\X;\Y|\T=\mathbf{t})=I(\X;\Y|\T)$ and thus $\max_{\T-\X-\Y} I(\X;\Y|\T)=\max_{\X} I(\X;\Y)$.
\end{proof}

\begin{remark}
This result is also obtained by taking $\X$ (and thus $\Y$) independent of $\T$ such that $I(\T;\Y)=0$ in the preceding Lemma.
We could also consider the more general situation where the channel also depends on $\T$. In this case we would have $C=\E\{C_T\}$ where $q\, C_\mathbf{t}=\max_{\X} I(\X;\Y|\T=\mathbf{t})$.
\end{remark}

\begin{remark}
As it turns out, capacity yields an upper bound on $I(K;\mathbf{Y}|\mathbf{T})$ which can improve the trivial upper bound of Lemma~\ref{lem:I_uyt:h_k}. This does not mean, however, that one is faced with a channel coding problem since the ``encoder'' hence $X$'s distribution cannot be chosen by the attacker.
\end{remark}

\section{Bounds on the Success Probability of Attack}
\label{sec:bound:sr}


By combining Lemmas~\ref{lemma:chain:mi},~\ref{lemma:ml:att} and~\ref{lemma:mi:h_k}, we have $I(K; \hat{K})\leq I(\U; \Y|\T)\leq H(K)$. Now the probability of success (estimated as the \emph{success rate} in SCA) is defined as:
$P_s = \P(\hat{K}=K)$.
The corresponding ``probability of error'' (of attack failure) is $P_e = 1-P_s$. 
Using Fano's inequality~\cite{information_theory} we end up with the following theorem.
\begin{theorem}
\label{theorem:sr:bound}
Given the side-channel setting as in Fig.~\ref{fig:channel:view}, we have
\begin{equation}
d_P(P_s) \leq I(\U; \Y|\T),
\end{equation}
where $d_P(p) = H(K) - H_2(p) - (1-p)\log(2^\ell -1)$ and $H_2(p) = -p\log p - (1-p) \log(1-p)$, for $p\in[2^{-\ell}, 1]$. (Recall that $\ell$ denotes the number of bits in $K=k$.)
\end{theorem}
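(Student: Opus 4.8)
The plan is to apply Fano's inequality to the pair $(K,\hat K)$ and then invoke the chain of inequalities already established. Recall that $\hat K$ is an estimator of $K$ taking values in $\F_{2^\ell}$, so $K$ has an alphabet of size $2^\ell$. Fano's inequality in its sharpest form states that for any such pair,
\begin{equation}
H(K\mid\hat K) \leq H_2(P_e) + P_e\log(2^\ell-1),
\end{equation}
where $P_e = \P(\hat K\neq K)=1-P_s$ is the error rate. I would first write $I(K;\hat K) = H(K) - H(K\mid\hat K)$ and substitute the Fano bound to obtain
\begin{equation}
I(K;\hat K) \geq H(K) - H_2(P_e) - P_e\log(2^\ell-1).
\end{equation}

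Next I would rewrite the right-hand side in terms of $P_s$ instead of $P_e$. Since $H_2$ is symmetric about $1/2$, $H_2(P_e)=H_2(1-P_s)=H_2(P_s)$, and $P_e=1-P_s$, so the bound becomes $I(K;\hat K)\geq H(K)-H_2(P_s)-(1-P_s)\log(2^\ell-1) = f_P(P_s)$, matching the definition given in the statement.

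To finish, I would chain this with the earlier results: Lemma~\ref{lemma:ml:att} gives $I(K;\hat K)\leq I(K;\Y\mid\T)$, and Lemma~\ref{lemma:chain:mi} gives $I(K;\Y\mid\T)=I(\U;\Y\mid\T)$. Combining, $f_P(P_s)\leq I(K;\hat K)\leq I(\U;\Y\mid\T)$, which is exactly the claimed inequality. The restriction $p\in[2^{-\ell},1]$ is natural since a random guess already achieves $P_s=2^{-\ell}$; on this interval one should also note that $f_P$ is monotonically increasing (its derivative involves $\log\frac{p}{1-p}+\log(2^\ell-1)\geq 0$ for $p\geq 2^{-\ell}$), so the theorem can be inverted to yield a genuine upper bound on $P_s$ in terms of the mutual information.

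The only real subtlety—not an obstacle so much as a point requiring care—is justifying the exact form of Fano's inequality with the $\log(2^\ell-1)$ factor rather than the cruder $\log 2^\ell$; this is the standard strengthened version obtained by conditioning on the error event and noting that, given an error, $K$ ranges over at most $2^\ell-1$ values. Everything else is a direct substitution and reuse of the lemmas already proved, so the argument is short.
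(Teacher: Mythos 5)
Your proof is correct and follows essentially the same route as the paper: apply Fano's inequality to $(K,\hat K)$ to get $f_P(P_s)\leq I(K;\hat K)$, then chain with Lemma~\ref{lemma:ml:att} and the equality $I(K;\Y|\T)=I(\U;\Y|\T)$ from Lemma~\ref{lemma:chain:mi}. You simply spell out the Fano step (the $\log(2^\ell-1)$ term and the symmetry $H_2(P_e)=H_2(P_s)$) that the paper leaves implicit.
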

\begin{proof}
By Fano's inequality~\cite{information_theory} and Lemma~\ref{lemma:ml:att}, we have $H(K) - H_2(P_s) - (1-P_s)\log(2^\ell -1)\leq H(K)-H(K|\hat{K})= I(K; \hat{K}) \leq I(\U; \Y|\T)$. 
\end{proof}

Since $d_P(p)$ is strictly increasing for $p\in[2^{-\ell}, 1]$~\cite[\S A]{cryptoeprint:2019:491}
and $I(\U; \Y|\T)$ increases as $q$ increases, 
Theorem~\ref{theorem:sr:bound} not only provides an upper bound on $P_s$, but also gives a \emph{lower bound} on the number of queries $q$ to obtain a specific value of $P_s$.

\medskip

\begin{remark}
\label{remark:mi:xyt}
A much looser bound on $P_s$ can obtained from Lemmas~\ref{lemma:chain:mi} and~\ref{lemm-capacity}. Using Theorem~\ref{theorem:sr:bound}, one readily obtains
\begin{equation}\label{eq:bound:I_xyt}
d_P(P_s)\leq I(\X; \Y|\T) \leq q\, C
\end{equation}
where $C$ is the side-channel capacity, which is $C=\frac{1}{2}\log (1+\text{SNR})$ for an AWGN channel\,\footnotemark. 

However, as we will show below, this bound is useless in evaluating masked implementations, particularly because $I(\X; \Y|\T)$ is unbounded (compare with Lemma~\ref{lemma:mi:h_k}).
In fact, we will show in next section that $I(\X; \Y|\T)$ is very close to the capacity $q\, C$ in the presence of a Boolean masking on an AWGN channel, hence it increases linearly in $q$ without bound.
\end{remark}

%


\section{Application to Hamming Weight Leakages with Additive White Gaussian Noise}
\label{sec:numerical}

By the equalities of Lemma~\ref{lemma:chain:mi}, the only two MIs that need to be evaluated are $I(\X; \Y|\T)$ and $I(\U; \Y|\T)$.
Taking notations from Fig.~\ref{fig:channel:view}, we calculate both MIs numerically. We have
\begin{align}
\begin{split}
I(\X; \Y|\T) &= H(\Y|\T) - H(\Y|\X, \T), \\
I(\U; \Y|\T) &= H(\Y|\T) - H(\Y|\U, \T),
\label{eqn-IXYT}
\end{split}
\end{align}
where for the AWGN channel
\begin{equation}
H(\Y|\X,\T) = H(\Y|\X) = H(\N) 
= \frac{q}{2}\log\left( 2\pi e\sigma^2 \right),
\end{equation}
and where $H(\Y|\T)$ and $H(\Y|\U, \T) = H(\Y|\U)$ are estimated by Monte-Carlo simulations as shown next.

\subsection{
Monte-Carlo Simulation}

As the number of traces $q$ gets very large, direct integration to evaluate mutual information becomes infeasible. 
Monte-Carlo simulation is a well-known method to estimate expectations of a function under certain distribution by repeated random sampling.
We can then estimate the first term $H(\Y|\T)$ in~\eqref{eqn-IXYT} by randomly drawing $N_C$ samples:
\begin{align}
\begin{split}
H(\Y|\T) &= \int_{\mathbf{y}}\sum_{\mathbf{t}} p(\mathbf{y}, \mathbf{t}) \log\frac{1}{p(\mathbf{y}|\mathbf{t})} \;\text{d}\mathbf{y} \\[-1ex]
&= \lim_{N_C\rightarrow \infty} - \frac1{N_C} \sum_{j=1}^{N_C} \log p(\mathbf{y}^j|\mathbf{t}^j),
\label{eqn:hyt:est}
\end{split}
\end{align}
where each $(\mathbf{t}^j, \mathbf{y}^j)$, for $1\leq j\leq N_C$, is drawn randomly. The estimation in \eqref{eqn:hyt:est} is sound based on the law of large numbers~\cite[Chap. 3]{information_theory} and it has been numerically verified in~\cite{DBLP:journals/tches/CheriseyGRP19}.
Similarly, $H(\Y|\U)$ can be estimated using Monte-Carlo simulation by $H(\Y|\U)= - \frac1{N_C} \sum_{j=1}^{N_C} \log p(\mathbf{y}^j|\mathbf{u}^j)$.

\begin{figure}[!h]
	\centering
	\vspace{-0.2cm}
	\includegraphics[width=0.45\textwidth,height=4cm]{./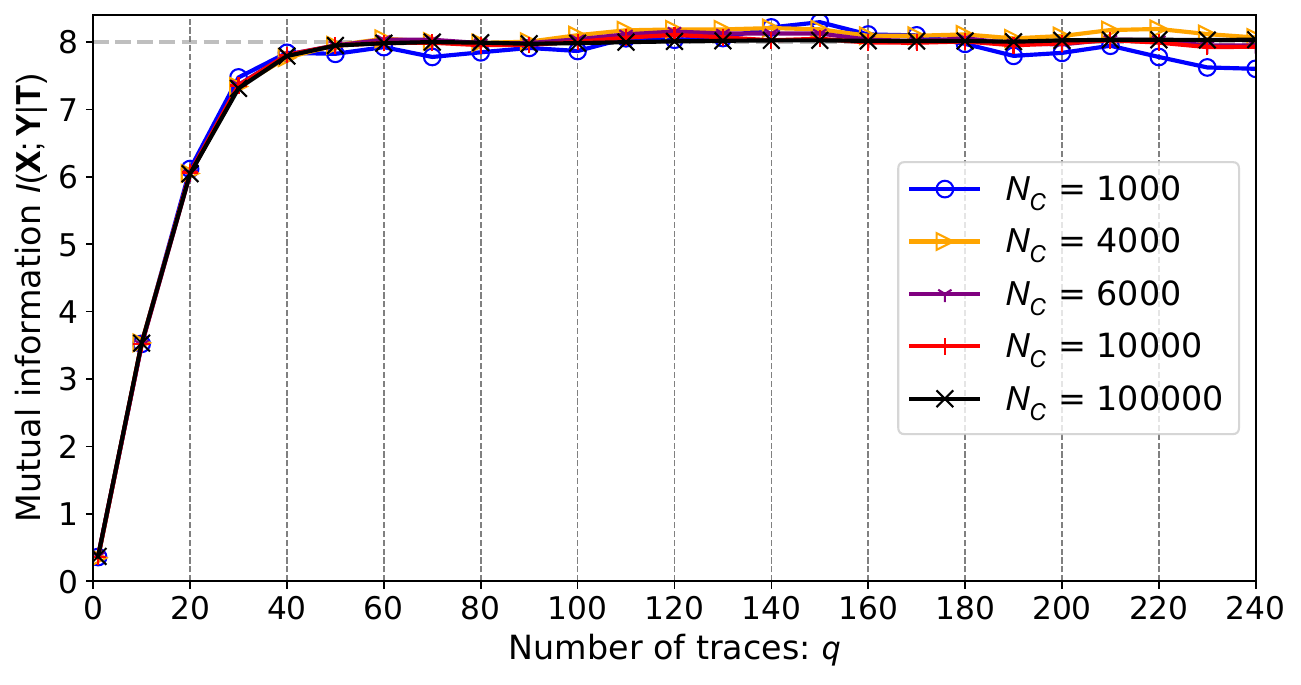}
	\vspace{-0.3cm}
	\caption{Monte-Carlo simulation with various $N_C$ draws where $\sigma^2=10.00$.}
	\vspace{-0.0cm}
	\label{fig:mi:I_xy_t:conv}
\end{figure}

The accuracy of Monte-Carlo simulation highly depends on the number of samples. 
As an illustration, consider the unprotected case where there is no masking and for which $I(\X; \Y|\T)$ is bounded by $H(K)=8$ bits.
As shown in Fig.~\ref{fig:mi:I_xy_t:conv}, the estimation of $I(\X; \Y|\T)$
gets more accurate by using larger $N_C$. In particular, this estimation on $I(\X; \Y|\T)$ is accurate enough by using only $N_C=100,000$ draws. For all results in this paper we use $N_C=1,000,000$ 
to obtain a very stable estimation.

\subsection{Numerical Results for First-order Boolean Masking}

Here $(\mathbf{t}^j, \mathbf{y}^j)$, for $1\leq j\leq N_C$, is drawn i.i.d.\@
according to this process:
\begin{itemize}
\item $\mathbf{t}^j\sim\mathcal{U}(\F_{2^\ell}^q)$,
\item $\mathbf{m}^j\sim\mathcal{U}(\F_{2^\ell}^q)$,
\item $k^j\sim\mathcal{U}(\F_{2^\ell})$, and
\item $\mathbf{y}^j\sim\mathcal{N}(w_H(S(\mathbf{t}^j\oplus {k}^j) \oplus\mathbf{m}^j)+w_H(\mathbf{m}^j), \sigma^2 \mathbf{I}_q)\in\R^q$.
\end{itemize}
Note that we consider the zero-offset leakage~\cite{DBLP:journals/ccds/CarletG18} where the leakages of each share are summed together (see the sum of two Hamming weights above).
%
For each draw $(\mathbf{t}, \mathbf{y})$, we have
\begin{align}
\begin{split}
p(\mathbf{y}|\mathbf{t}) &= \sum_k p(k) p(\mathbf{y}|\mathbf{t}, k) 
= \sum_k p(k) \prod_{i=1}^{q} p(\mathbf{y}_i|\mathbf{t}_i, k)  \\
&= \sum_k p(k) \prod_{i=1}^{q} \sum_{m_i} p(m_i) p(\mathbf{y}_i|\mathbf{t}_i, k, m_i)  \\
&= \sum_k p(k) \prod_{i=1}^{q}\sum_{m_i}p(m_i)\frac{e^{\frac{-\left( \mathbf{y}_i - f(\mathbf{t}_i, k, m_i) \right)^2}{2\sigma^2}}}{({2\pi \sigma^2})^{1/2}},
\label{eqn:pyt:mask}
\end{split}
\end{align}
where $f(\mathbf{t}_i, k, m_i) = w_H(S(\mathbf{t}_i\oplus k)\oplus m_i) + w_H(m_i)$ is the zero-offset leakage under Hamming weight model.
Again, taking $K\in\F_{2^\ell}$ uniformly, and considering that all masks are i.i.d.\@ $\sim\mathcal{U}(\F_{2^\ell})$, we have
\begin{multline}
\log p(\mathbf{y}|\mathbf{t}) = -\ell(q+1) - \frac{q}{2} \log\left( 2\pi\sigma^2 \right)\\
+ \log \sum_k \prod_{i=1}^{q} \sum_{m} e^{\frac{-\left( \mathbf{y}_i - f(\mathbf{t}_i, k, m) \right)^2}{2\sigma^2}}.
\label{eqn:pyt:mask:log}
\end{multline}
\vspace*{-2ex}
\begin{multline}
\log p(\mathbf{y}|\mathbf{u}) = -q\ell - \frac{q}{2} \log\left( 2\pi\sigma^2 \right)\\
+ \log \prod_{i=1}^{q} \sum_{m} e^{\frac{-\left( \mathbf{y}_i - f'(\mathbf{u}_i, m) \right)^2}{2\sigma^2}}.
\label{eqn:pyu:mask:log}
\end{multline}
where $f'(\mathbf{u}_i, m)=w_H(\mathbf{u}_i\oplus m) + w_H(m)$.


\begin{figure*}[!b]
	\vspace{-0.4cm}
	\centering
	\subfigure[$I(\X; \Y|\T)$]{
	\includegraphics[width=0.49\textwidth]{./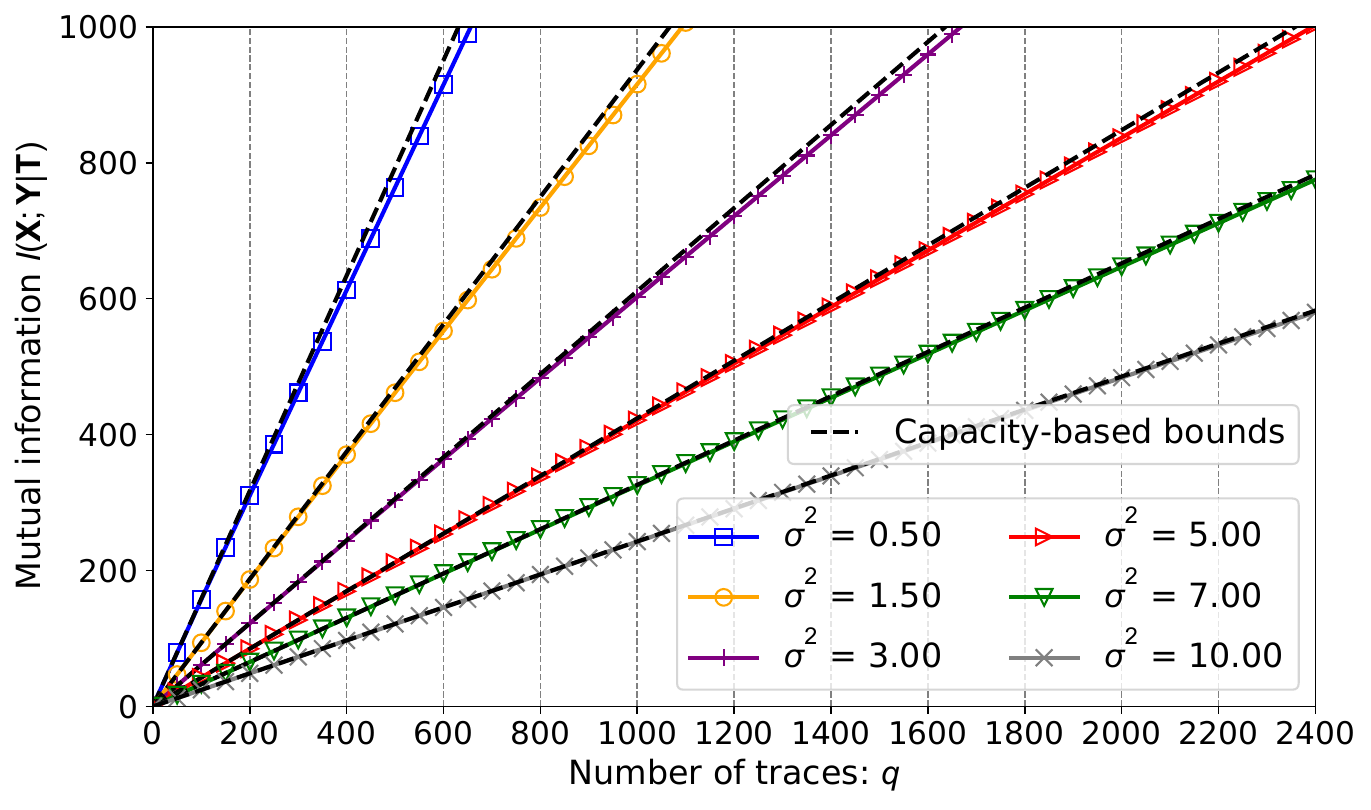}
	\label{fig:mi:mask:1d:I_xyt}
	}\hspace{0.0cm}
	\subfigure[$I(\U; \Y|\T)$]{
		\includegraphics[width=0.47\textwidth]{./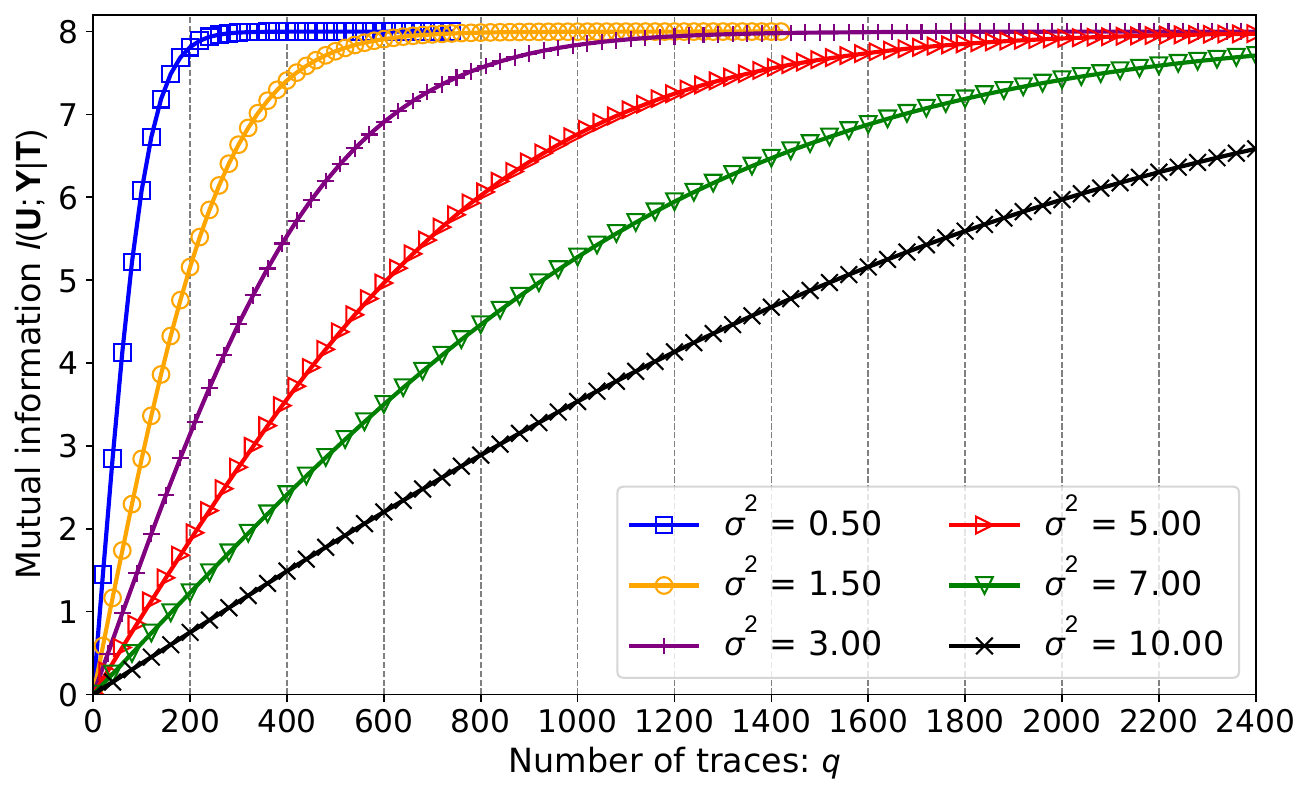}
	\label{fig:mi:mask:1d:I_uyt}
	}
	\vspace{-0.2cm}
	\caption{Evolution of mutual information $I(\X; \Y|\T)$ and $I(\U; \Y|\T)$ with the number of traces under different levels of noise in masked cases, with $N_C = 1,000,000$. Note that $I(\X; \Y|\T)$ is upper bounded by Shannon’s channel capacity, while $I(\U; \Y|\T)$ is upper bounded by $H(K)=8$ bits.}
	\label{fig:mi:mask:1d:xyt:uyt}
	\vspace{-0.3cm}
\end{figure*}



The numerical results of $I(\X; \Y|\T)$ are depicted in Fig.~\ref{fig:mi:mask:1d:I_xyt}.
It clearly appears that the effect of masking is to increase the values of $I(\X;\Y|\T)$ without bound.
This motivates our focus on $I(\mathbf{U};\Y|\T)$. 
The dotted black lines in Fig.~\ref{fig:mi:mask:1d:I_xyt} show that upper bounds given by~\eqref{eq:bound:I_xyt} are very tight.


As shown in Fig.~\ref{fig:mi:mask:1d:I_uyt}, $I(\U; \Y|\T)$ is bounded as expected by $H(K)$ in  Lemma~\ref{lem:I_uyt:h_k}. Particularly, given the same noise level, the number of traces needed to obtain $I(K; \Y|\T) = I(\U; \Y|\T) \approx 8~\text{bits}$ is much larger than in the unprotected case. 
The curves $I(\U; \Y|\T)$ vs $\sigma^2$ also look homothetic with a scale of $\sigma^2$. 
%
This is justified by a simple scaling argument:
if the number of traces for a given set of $(\T,\U)$ is doubled,
then the mutual information is the same as with the nominal number of queries, but with SNR doubled as well.

\subsection{Bounds on Success Rate in Masked Implementations}

By Theorem~\ref{theorem:sr:bound}, we have an upper bound on probability of success $P_s$. This equivalently gives a lower bound on the minimum of $q$ to get a specific $P_s$. 

Numerical results are shown in Fig.~\ref{fig:mi:mask:1d:sr} where we present several instances with different levels of Gaussian noises. In particular, the ML attacks utilize the higher-order distinguishers which have been demonstrated to be optimal in the presence of masking~\cite{DBLP:conf/asiacrypt/BruneauGHR14}. In order to evaluate $P_s$ of ML attacks, each attack is repeated $200$ times to have a more accurate success rate.

\begin{figure*}[!htbp]
	\centering
	\vspace{-0.2cm}
	\includegraphics[width=0.99\textwidth,height=7.9cm]{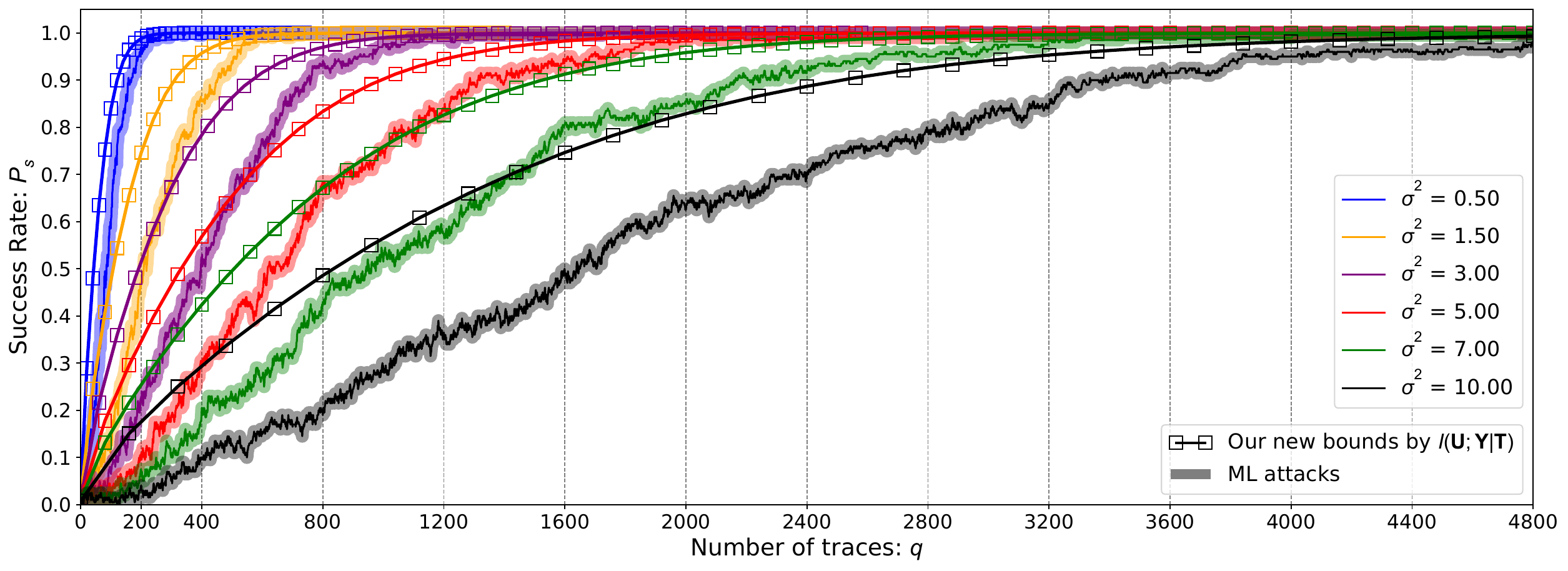}
	\vspace{-0.3cm}
	\caption{Application and comparison of bounds on success rate. We present six instances with different noise levels by using $q_{\max}=4800$ traces. Note that we omit the bounds given by $I(\X;\Y|\T)$ as they are invisible when plotted together with bounds given by $I(\U;\Y|\T)$.}
	\label{fig:mi:mask:1d:sr}
	\vspace{-0.3cm}
\end{figure*}


Figure~\ref{fig:mi:mask:1d:sr} already shows the usefulness of the bound given by $I(\U;\Y|\T)$. 
Indeed, a commonly used metric on attacks is the minimum number of traces to reach $P_s\geq 95\%$. Considering $\sigma^2=3.00$ in Fig.~\ref{fig:mi:mask:1d:sr}, we set $P_s=95\%$ and the ML attack needs around $q=800$ traces, 
where our new bound gives $q=720$, while the bound proposed in~\cite{DBLP:journals/tches/CheriseyGRP19} by using $I(\X;\Y|\T)$ only gives $q=12$. The comparison would be even worse for higher levels of noise.

Figure~\ref{fig:mi:mask:sr:bound} provides a more detailed comparison
by plotting the predicted minimum numbers of traces $q_{\min}$ reaching $P_s\geq 95\%$ given by both $I(\U;\Y|\T)$ and $I(\X;\Y|\T)$.
These curves show that our new bound is much tighter than the previous one from the state-of-the-art~\cite{DBLP:conf/isit/CheriseyGRP19,DBLP:journals/tches/CheriseyGRP19},
as it captures the masking scheme --- recall from Fig.~\ref{fig:channel:view} that the masking countermeasure step is between $\U$ and $\Y$ but not between $\X$ and $\Y$.

\begin{figure}[!tbp]
	\centering
	\includegraphics[width=0.49\textwidth]{./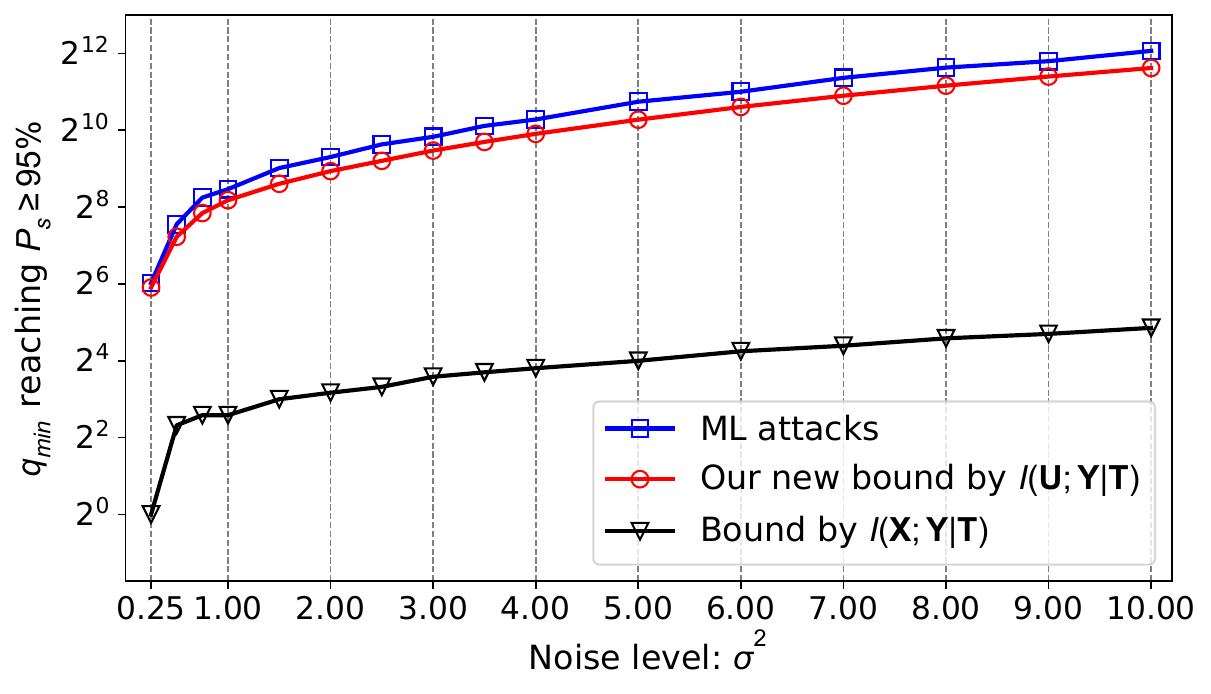}
	\vspace{-0.74cm}
	\caption{Comparison of the minimum number of traces $q_{\min}$ to reach $P_s\geq 95\%$ predicted by our new bound, by $I(\X;\Y|\T)$ as in~\cite{DBLP:journals/tches/CheriseyGRP19} and also the baseline given by an ML attack.}
	\vspace{-0.5cm}
	\label{fig:mi:mask:sr:bound}
\end{figure}


\section{Conclusions}
\label{sec:conclusion}

We derived security bounds for side-channel attacks in the presence of countermeasures (first-order masking).
To do this, we leveraged the seminal framework from Ch\'erisey et al.~\cite{DBLP:conf/isit/CheriseyGRP19,DBLP:journals/tches/CheriseyGRP19} and extended it to the masking case of a protection aiming at randomizing the leakage.

The generalization not only enhances bounds compared to Ch\'erisey et al., but also improves on the computation method for the security metric, by resorting to a powerful information estimation based on the Monte Carlo method.
Our results provide quantitative bounds allowing for the theoretical (``pre-silicon'') evaluation of protections applied on top of a given cryptographic algorithm in designing secure circuits. As a perspective, we will push forward the practical applications of our findings in evaluating concrete security level of cryptographic circuits.



\flushend


\providecommand{\noopsort}[1]{}

\end{document}